\documentclass[letterpaper, 10 pt, conference]{ieeeconf}  

\usepackage{subcaption}
\let\labelindent\relax


\IEEEoverridecommandlockouts
\usepackage{microtype}
\usepackage{graphicx}
\usepackage{booktabs} 
\usepackage{wrapfig}
\usepackage[font=footnotesize]{caption}

\usepackage{enumitem}

\usepackage{xcolor}
\usepackage{algorithm}
\usepackage{algpseudocodex}

\usepackage{hyperref}
\usepackage{makecell}

\usepackage{amsmath}
\usepackage{amssymb}
\usepackage{mathtools}
\usepackage{amsthm}

\usepackage[capitalize,noabbrev]{cleveref}

\theoremstyle{plain}
\newtheorem{theorem}{Theorem}[section]
\newtheorem{proposition}[theorem]{Proposition}

\newtheorem{corollary}[theorem]{Corollary}
\theoremstyle{definition}

\newtheorem{assumption}[theorem]{Assumption}

\theoremstyle{remark}

\algrenewcommand\textproc[1]{\ensuremath{\mathrm{#1}}}

\title{\textbf{Inverse Safety Filtering: Inferring Constraints from Safety Filters for Decentralized Coordination}}

\author{Minh Nguyen$^{1}$, Jingqi Li$^{2}$, Gechen Qu$^{1}$, and Claire J. Tomlin$^{1}$
\thanks{*This work was supported by NSF Safe Learning Enabled Systems, 
the DARPA Assured Autonomy, ANSR, and TIAMAT programs, and
the NASA ULI on Safe Aviation Autonomy.  
J.L. was supported by an Oden
Institute Fellowship.}
\thanks{$^{1}$Minh Nguyen (corresponding author), Gechen Qu, and Claire J. Tomlin are with the Department of Electrical Engineering and Computer Sciences, University of California Berkeley, 
{\tt\small minh02@berkeley.edu, qugch@berkeley.edu, tomlin@berkeley.edu}.}%
\thanks{$^{2}$Jingqi Li is with the Oden Institute, University of Texas at Austin, 
        {\tt\small jingqi.li@austin.utexas.edu}.}%
}

\begin{document}
\maketitle

\begin{abstract}
Safe multi-agent coordination in uncertain environments can benefit from learning constraints from other agents. Implicitly communicating safety constraints through actions is a promising approach, allowing agents to coordinate and maintain safety without expensive communication channels. This paper introduces an online method to infer constraints from observing the safety-filtered actions of other agents. We approach the problem by using safety filters to ensure forward safety and exploit their structure to work backwards and infer constraints. We provide sufficient conditions under which we can infer these constraints and prove that our inference method converges. This constraint inference procedure is coupled with a decentralized planning method that ensures safety when the constraint activation distance is sufficiently large. We then empirically validate our method with Monte Carlo simulations and hardware experiments with quadruped robots.
\end{abstract}


%
\IEEEpeerreviewmaketitle

\section{Introduction}


Cooperative multi-agent systems are widely used in applications such as transportation \cite{yang_collaborative_2022}, area monitoring \cite{cortes_coverage_2004}, and manipulation \cite{khatib_coordination_1996}, where teams of robots coordinate to accomplish tasks that are impractical for a single agent. In these settings, constraints such as obstacle avoidance are essential for the safe operation of the entire team. Centralized planning can be utilized to coordinate teams while respecting constraints, but is often impractical due to unreliable communication, limited bandwidth, and computational bottlenecks. Decentralized planning mitigates these issues by enabling agents to operate using local information, but introduces challenges with asymmetric observations of the environment. This raises a fundamental question: \emph{how can decentralized agents coordinate safely under information asymmetry?}

\begin{figure}[t]
    \centering
    \includegraphics[width=\linewidth]{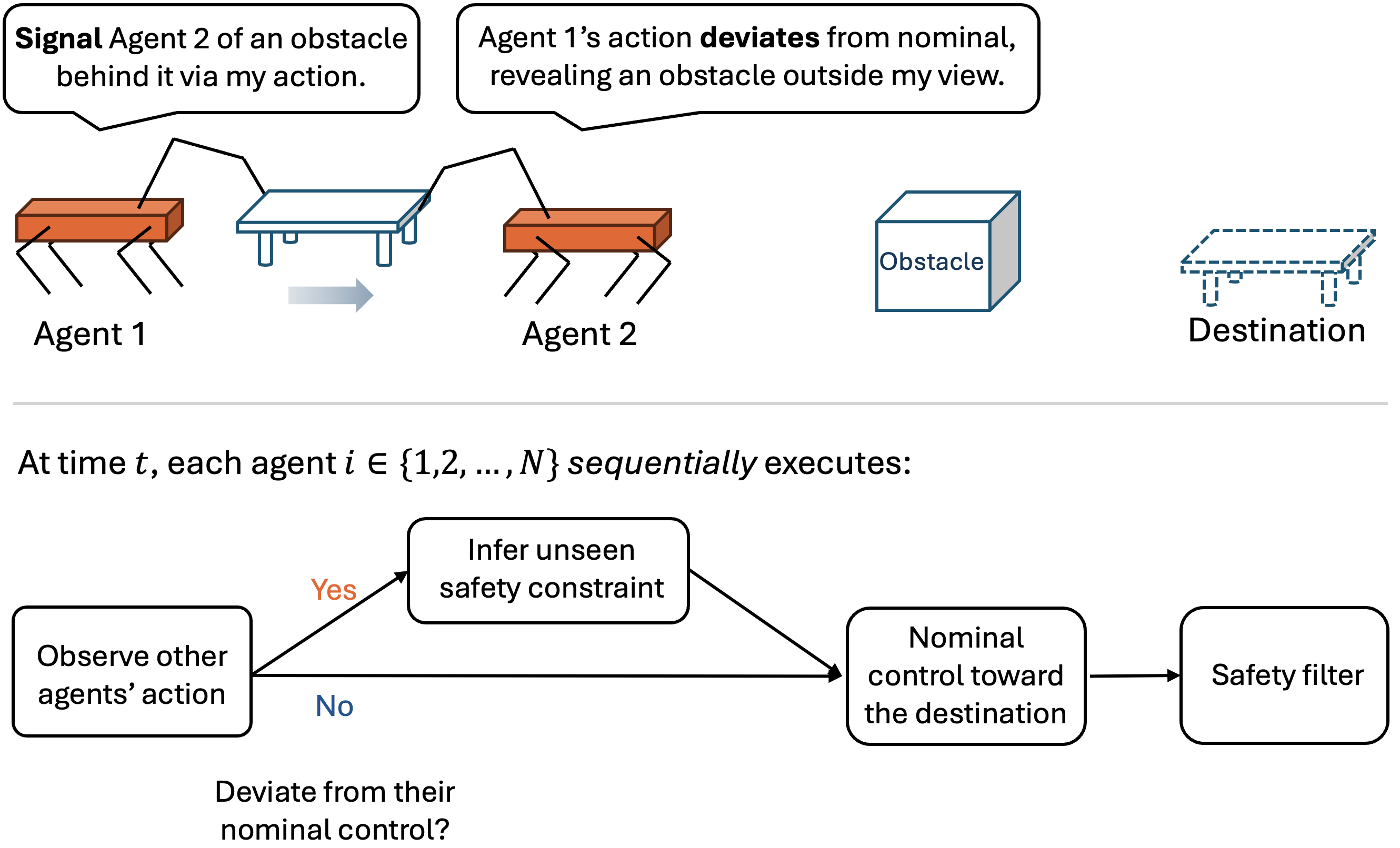}
    \caption{Applying our decentralized constraint inference and planning method to a furniture-moving task. Agents share a goal but possess asymmetric knowledge of environmental constraints (Agent 1 initially knows of the obstacle but Agent 2 does not). Our method enables decentralized constraint inference and safe planning, with theoretical guarantees on constraint identifiability and safety without direct communication among agents.}
    \label{fig:highleveldiagram}
\end{figure}
A substantial body of work addresses safe multi-agent coordination by equipping each agent with a Control Barrier Function (CBF) safety filter \cite{ames_control_2014}. The goal is to minimally modify a nominal control input to ensure forward invariance of the safe set. Most prior work \cite{borrmann_control_2015, wang_distributed_2010} assumes agents already know their teammates' constraints. Separately, the problem of inferring objectives and constraints from demonstrations has been studied through inverse optimal control \cite{papadimitriou_constraint_2023} and inverse reinforcement learning \cite{ng2000algorithms, malik_inverse_2021}. These methods require offline expert demonstrations. The intersection of CBF-based safety filtering and online constraint inference in multi-agent settings remains largely unexplored.

In this work, we show that the optimality conditions of CBF-based safety filters can be exploited to efficiently and accurately infer other agents’ constraints. Our key insight is that \emph{enforcing a contraction condition on the constraint values over time, as introduced in standard control barrier function conditions, improves the identifiability of the constraint and therefore facilitates constraint inference.} We further prove that our proposed inference procedure converges to the true constraint given that the constraint activation distance is sufficiently large, providing formal guarantees for the constraint learning procedure. By combining constraint learning with a decentralized planning method, we enable safe and coordinated operation of agent teams without explicit communication channels.

Our contributions in this work are:
\begin{enumerate}
    \item A provably convergent method for inferring constraints from safety filters. 
    \item A generalization of the constraint inference method in 1) to safety filters involving multi-agent formation constraints.
    \item A decentralized inference and planning method for multiple agents to safely cooperate without direct communication, handling static and moving obstacles, as well as multi-team interactions. Moreover, we establish safety guarantees for this decentralized control framework. 
    \item Hardware experiments on quadruped robots that demonstrate the real-world applicability of the approach.
\end{enumerate}

\section{Related Work}
\subsection{Inferring missing information in decentralized control}

Inferring information about other agents has been extensively researched in inverse optimal control \cite{molloy_online_2020} and inverse reinforcement learning \cite{ng2000algorithms, self_model-based_2022}. These techniques have been used offline \cite{li_cost_2023} and online \cite{soltanian_peer-aware_2025} to learn about the intent, or objectives, of other agents. Other work has inferred the constraints of other agents by optimizing over the optimality conditions of observed actions \cite{palafox_learning_2023, papadimitriou_constraint_2023}, sampling possible parameters of the constraints \cite{doi:10.1177/02783649211035177}, and constrained reinforcement learning \cite{scobee_maximum_2019, malik_inverse_2021, xu2024uncertaintyaware}. However, these approaches are primarily offline and require expert demonstration data. Our approach provides an online, model-based method to infer constraints from multi-agent interactions. We study conditions under which our constraint inference method is guaranteed to converge and empirically characterize the set of initial conditions that converge to the true solution.

\subsection{Safe planning methods}

Many methods have been used to ensure safe control of dynamical systems such as barrier certificates \cite{prajna_safety_2004} and Hamilton-Jacobi reachability \cite{bansal_hamilton-jacobi_2017}. A popular approach is to use CBFs \cite{ames_control_2014}, which are attractive because they provide a reasonable control law to enforce constraints. CBF constraints can be used in minimally invasive safety filters \cite{ames_control_2019, hsu_safety_2024, wabersich_data-driven_2023} and integrated into trajectory planners \cite{zeng_safety-critical_2021}, which showcases their versatility. These approaches have been used for safe, cooperative planning in quadruped \cite{kim_layered_2023} and UAV applications \cite{pallar_optimal_2025, hegde_multi-uav_2022}. In this work, we exploit the structure of CBF safety filters as both a forward safety filter and a mechanism from which to efficiently infer constraints from.

\subsection{Multi-agent planning under incomplete information}

Recent approaches to decentralized, multi-agent planning have used consensus optimization \cite{alonso-mora_distributed_2019} and CBF filters \cite{wang_safety_2016} to generate collision-free trajectories for teams of agents. A key challenge in this area is learning from other agents to fill in missing information about the environment \cite{dragan_legibility_2013, doi:10.1177/02783649211035177}.  Our approach is similar to Bujarbaruah et al. \cite{bujarbaruah_learning_2021}, in which we implement a demonstrator-learner dichotomy but with a more general formulation of constraints and no fixed control policy. This allows us to handle larger teams and ensure formal safety guarantees.

\section{Problem Formulation}
\label{sec:problem_formulation}

Consider an $N$-player, $T$-stage, deterministic, discrete-time dynamic game, with a state $x_t^i \in \mathbb{R}^{n_i}$ and control input $u_t^i \in \mathbb{R}^{m_i}$ for each player $i \in [N] := \{1, \cdots, N\}$, $t \in [T]$. Let the dimension of the joint state and control input be $n := \sum_{i=1}^N n_i$ and $m := \sum_{i=1}^N m_i$, respectively. We denote by $x_t := [x_t^i]_{i=1}^N \in \mathbb{R}^n$ and $u_t := [u_t^i]_{i=1}^N \in \mathbb{R}^m$ the joint state and joint control at time $t \in [T]$, respectively. Define $\mathbf{x}:=[x_t]_{t=1}^T$ and $\mathbf{u}:=[u_t]_{t=1}^T$. The joint dynamics is:
\begin{equation}\label{eq:dynamics}
    x_{t+1} = f(x_t) + g(x_t) \cdot u_t
\end{equation}
The objective of each agent $i$ is to minimize its overall cost, given by the sum of its running costs $c_t^i : \mathbb{R}^n \times \mathbb{R}^m \to \mathbb{R}$ over the time horizon $T$:
\begin{equation}
    J^i(\mathbf{x}, \mathbf{u}) := \sum_{t=1}^{T} c_t^i(x_t, u_t)
\end{equation}

\subsection{Parameterized Constraints}

To simplify our analysis of constraint identifiability, we define the \textit{constraint-relevant state} at each timestep as $s^i_t \coloneq P x^i_t \in \mathbb{R}^d$ via a projection $P : \mathbb{R}^{n_i} \to \mathbb{R}^{d}$, and the \textit{constraint-relevant control matrix} as $B_s \coloneq P \cdot g(x_t^i)$. We consider constraints of the form $h(s, \theta) \geq 0$ parameterized by $\theta \in \mathbb{R}^k$, such that inferring $\theta$ is sufficient for specifying the entire constraint. For example, $s$ could be the position of an agent and $\theta$ could be the position of an obstacle.

To enforce these constraints, we utilize Discrete-Time CBFs which take the form of:
\begin{equation}
\label{eq:cbf_condition}
    h(s_{t+1}, \theta) \geq (1 - \gamma)\, h(s_t, \theta),
\end{equation}
where $\gamma \in (0, 1)$ is the decay rate. This condition ensures the barrier value at the next timestep is lower-bounded by $(1-\gamma)$ times its current value. This constraint is enforced through a \emph{safety filter} that minimally modifies a given nominal control:
\begin{equation}
\label{eq:safety_filter}
\begin{aligned}
    u_{\mathrm{safe}} = \arg\min_{u} \quad & \tfrac{1}{2}\|u - u_{\mathrm{nom}}\|^2 \\
    \text{s.t.} \quad & h(s_{t+1}(u), \theta) \geq (1-\gamma)\, h(s_t, \theta)
\end{aligned}
\end{equation}
For a control-affine system with a quadratic constraint as in \eqref{eq:barrier}, the problem is a non-convex Quadratically Constrained Quadratic Program (QCQP).

\textbf{Running Example:} We consider a furniture-moving problem similar to Figure~\ref{fig:highleveldiagram} where two agents work together to carry a piece of furniture past obstacles to a goal position. The robots are modeled as double integrators:
\begin{equation}
    p_{t+1} = p_t + \Delta t \cdot v_t + \tfrac{1}{2}\Delta t^2 \cdot u_t, \quad
    v_{t+1} = v_t + \Delta t \cdot u_t.
\end{equation}
The constraint-relevant state, $s$, is the position of each agent, and obstacle avoidance is enforced via circular constraints parameterized by the centerpoint of the obstacle, $\theta$:
\begin{equation}
\label{eq:barrier}
    h(s, \theta) = (s - \theta)^\top Q (s - \theta) - r^2,
\end{equation}
where $Q \succ 0$ is a symmetric, positive definite matrix and $r > 0$ is a safety radius. For circular obstacles, $Q = I$ gives $h(s, \theta) = \|s - \theta\|^2 - r^2$. This is a common constraint enforcing that an agent must stay outside a region centered on~$\theta$.

\subsection{Decentralized Constraint Inference Problem}

\textbf{Constraint Inference Problem Formulation.} Given an observation of the nominal action, the filtered safe action, and the current state $(u_{\mathrm{nom}}, u_{\mathrm{safe}}, x_t)$, find the constraint parameterized by $\theta$ that explains the control deviation $\Delta u = u_{\mathrm{safe}} - u_{\mathrm{nom}}$ of another agent.

In Sections~\ref{sec:decoding} and~\ref{sec:multi_agent}, we present our method for inferring constraints from safety filters. When only the CBF constraint is active, we derive a closed-form solution for the constraint parameter. When additional constraints such as formation keeping are simultaneously active, we design a provably convergent Newton solver for numerical inference.

\section{Inverting CBF Safety Filters}
\label{sec:decoding}

We first show how one can infer a constraint parameterized by $\theta$ from a safety filter with an active CBF constraint and then provide sufficient conditions under which said constraint is identifiable and unique.

\subsection{KKT Optimality Conditions}
\label{sec:kkt}

To solve the constraint inference problem, we exploit the structure of the safety filter~\eqref{eq:safety_filter}. We first derive the Karush–Kuhn–Tucker (KKT) conditions of the filter and then obtain a closed-form solution for its unknown parameter~$\theta$.

The Lagrangian of the safety filter~\eqref{eq:safety_filter} with dual variable $\lambda \geq 0$ is:
\begin{equation}
    \label{eq:stationarity}
    \begin{split}
        \mathcal{L}(u, \lambda) = &\tfrac{1}{2}\|u - u_{\mathrm{nom}}\|^2 \\
                                &- \lambda\bigl[h(s_{t+1}(u), \theta) - (1{-}\gamma)h(s_t, \theta)\bigr].
    \end{split}
\end{equation}
At an optimal solution, $u_{\mathrm{safe}}$, the KKT stationarity condition gives:
\begin{equation}
\label{eq:kkt_stationarity}
    \Delta u \coloneqq u_{\mathrm{safe}} - u_{\mathrm{nom}} = \lambda\, \nabla_u h(s_{t+1}, \theta).
\end{equation}
For the remainder of this paper, we focus on quadratic constraints, which are widely used for obstacle avoidance and can approximate complex geometries in robotic applications. However, our framework extends to more general constraints.

Let $h(s, \theta) = (s - \theta)^\top Q (s - \theta) - r^2$ be a quadratic barrier and let $s_{t}$ evolve according to control-affine dynamics. Then $\nabla_u h = 2 B_s^\top Q(s_{t+1} - \theta)$ by the chain rule.

Substituting the gradient into~\eqref{eq:kkt_stationarity} yields:
\begin{equation}
\label{eq:kkt_combined}
    \Delta u = 2\lambda\, B_s^\top Q(s_{t+1} - \theta).
\end{equation}
The obstacle lies along the direction:
\begin{equation}
\label{eq:direction}
    \hat{d} \coloneqq
    \frac{Q^{-1}(B_s^\top)^{-1}\Delta u}
         {\|Q^{-1}(B_s^\top)^{-1}\Delta u\|}.
\end{equation}

\subsection{Closed-Form Solution}
\label{sec:closed_form}

With the relationship between the control deviation and constraint, we can derive a closed-form expression for $\theta$.
\begin{theorem}[Closed-Form Solution]
\label{thm:closed_form}
Suppose the safety filter constraint is active, $B_s \in \mathbb{R}^{d
\times d}$ is invertible, and $Q \succ 0$. Define $e \coloneqq s_t - s_{t+1}$. Define $A = \gamma\, \hat{d}^\top Q \hat{d}$, $B = -2(1{-}\gamma)(e^\top Q \hat{d})$, and $C = -(1{-}\gamma)\|e\|_Q^2 - \gamma r^2$.
The obstacle position satisfies:
\begin{equation}
\label{eq:obstacle_position}
    \theta = s_{t+1} - t^* \hat{d}, \quad t^* = \frac{-B + \sqrt{B^2 - 4AC}}{2A}.
\end{equation}
\end{theorem}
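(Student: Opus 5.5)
The plan is to combine the KKT stationarity relation \eqref{eq:kkt_combined} with the fact that the CBF constraint holds with equality. Stationarity pins down the \emph{direction} of $s_{t+1}-\theta$. Activity then gives a scalar quadratic equation for the \emph{distance} along that direction. Finally, a sign argument selects the correct root.

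\textbf{Step 1 (direction).} Since the constraint is active and the filter actually modifies the control, I take $\Delta u \neq 0$. Then \eqref{eq:kkt_stationarity} forces $\lambda > 0$: if $\lambda = 0$ we would have $\Delta u = 0$. Because $B_s$ is invertible and $Q \succ 0$, I can invert \eqref{eq:kkt_combined} to get
\[
s_{t+1}-\theta = \tfrac{1}{2\lambda}\, Q^{-1}(B_s^\top)^{-1}\Delta u .
\]
By \eqref{eq:direction} this equals $t^*\hat d$ with $t^* = \|Q^{-1}(B_s^\top)^{-1}\Delta u\|/(2\lambda) > 0$. So $\theta = s_{t+1} - t^*\hat d$ for some strictly positive scalar $t^*$, and it remains to identify $t^*$.

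\textbf{Step 2 (scalar equation).} Activity means $h(s_{t+1},\theta) = (1-\gamma)h(s_t,\theta)$. Substitute $s_{t+1}-\theta = t^*\hat d$ and $s_t - \theta = e + t^*\hat d$, where $e = s_t - s_{t+1}$. Expanding both quadratic forms gives
\[
(t^*)^2\hat d^\top Q\hat d - r^2 = (1-\gamma)\bigl[\|e\|_Q^2 + 2t^* e^\top Q\hat d + (t^*)^2 \hat d^\top Q \hat d - r^2\bigr].
\]
Collecting terms yields $A (t^*)^2 + B t^* + C = 0$, with $A,B,C$ exactly as in the statement. This is routine algebra.

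\textbf{Step 3 (root selection).} This is the only step requiring care. We have $A = \gamma\,\hat d^\top Q\hat d > 0$ because $\gamma\in(0,1)$, $Q\succ 0$ and $\hat d \neq 0$. We also have $C = -(1-\gamma)\|e\|_Q^2 - \gamma r^2 < 0$ because $r>0$. Hence $AC<0$, so the discriminant $B^2-4AC > B^2 \ge 0$ is positive and the two real roots have product $C/A<0$. One root is therefore strictly positive and the other strictly negative. Moreover $\sqrt{B^2-4AC} > |B|$, so the positive root is $(-B+\sqrt{B^2-4AC})/(2A)$. Since Step 1 shows $t^*>0$, $t^*$ must be this root, which gives \eqref{eq:obstacle_position}.

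The main subtlety I expect is justifying $\lambda>0$, i.e.\ that "active" means the filter genuinely changes the control. In the degenerate case $\Delta u = 0$, the direction $\hat d$ is undefined, so I would state $\Delta u\neq 0$ explicitly as part of the activity hypothesis. This same fact also guarantees uniqueness of $\theta$.
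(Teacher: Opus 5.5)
Your proposal is correct and follows the paper's proof: stationarity fixes the direction $\hat d$, substituting $\theta = s_{t+1} - t\hat d$ into the active constraint gives $At^2+Bt+C=0$, and $A>0$, $C<0$ leave exactly one positive root. You are slightly more careful than the paper, which leaves two points implicit: you derive $t^*>0$ directly from $\lambda>0$, which is what justifies taking the positive root, and you note that $\Delta u\neq 0$ is needed for $\hat d$ to be defined.
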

\begin{proof}
From the stationarity condition~\eqref{eq:kkt_stationarity} and the
quadratic barrier gradient, $\Delta u = 2\lambda B_s^\top Q(s_{t+1} -
\theta)$. Since $B_s$ is invertible and $Q \succ 0$, left-multiplying
by $Q^{-1}(B_s^\top)^{-1}$ and normalizing gives the obstacle
direction $\hat{d}$ in~\eqref{eq:direction}. Parameterizing
$\theta = s_{t+1} - t\hat{d}$ and substituting into the active
constraint $h(s_{t+1}, \theta) = (1-\gamma)\,h(s_t, \theta)$ yields
$At^2 + Bt + C = 0$ with the stated coefficients. Since $A > 0$ and
$C < 0$, the quadratic has exactly one positive root, which is the
unique obstacle position.
\end{proof}

For linear systems (e.g., double integrator), $B_s$ is constant and known analytically. For nonlinear systems, $B_s = P \cdot g(x_t)$ can be recomputed at every planning point. The derivation is for quadratic CBF constraints, but we provide sufficient conditions for general identifiability and uniqueness in the next section.

\subsection{Identifiability and Uniqueness}
\label{sec:uniqueness}

Identifiability means the observation $(u_{\mathrm{nom}}, u_{\mathrm{safe}}, x_t)$ carries enough information to solve for $\theta$. In general, multiple constraints can explain the same control deviation, which can cause coordination issues within teams. We now give sufficient conditions for identifiability and show that the quadratic barrier structure further guarantees uniqueness.

\begin{theorem}[Constraint Identifiability]
\label{thm:identifiability}
Consider a safety filter with control $u \in \mathbb{R}^m$, constraint-relevant state $s \in \mathbb{R}^d$, and a twice-differentiable barrier $h(s, \theta)$ with parameter $\theta \in \mathbb{R}^k$. The parameter $\theta$ is locally identifiable from a single observation if:
\begin{enumerate}[nosep]
    \item The constraint is active ($\lambda > 0$),
    \item $\operatorname{rank}(B_s) \geq k$ (sufficient actuation),
    \item $\operatorname{rank}(\nabla^2_{s\theta} h) \geq k$ (barrier sensitivity to $\theta$).
\end{enumerate}
\end{theorem}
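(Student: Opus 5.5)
The plan is to treat the KKT stationarity condition \eqref{eq:kkt_stationarity}, together with the active CBF condition, as a system of equations in the unknowns $(\theta,\lambda)$, and to apply the implicit function theorem. The observation $(u_{\mathrm{nom}}, u_{\mathrm{safe}}, x_t)$ fixes $\Delta u$, $s_t$ and $s_{t+1} = s_{t+1}(u_{\mathrm{safe}})$. By the chain rule, $\nabla_u h(s_{t+1},\theta) = B_s^\top \nabla_s h(s_{t+1},\theta)$, so we define
\begin{equation*}
F(\theta,\lambda) \coloneqq
\begin{bmatrix}
\lambda B_s^\top \nabla_s h(s_{t+1},\theta) - \Delta u \\
h(s_{t+1},\theta) - (1-\gamma)h(s_t,\theta)
\end{bmatrix}
\in \mathbb{R}^{m+1}.
\end{equation*}
Local identifiability means the true pair $(\theta^\star,\lambda^\star)$ is an isolated zero of $F$, so that no nearby $\theta$ explains the same observation. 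It therefore suffices to show that the Jacobian $\partial F/\partial(\theta,\lambda)$ has full column rank $k+1$ at the true pair.

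The first step is to compute the Jacobian blocks. The $\theta$-block of the stationarity rows is $\lambda B_s^\top \nabla^2_{s\theta} h$, which is $m\times k$. The $\lambda$-column of those rows is $B_s^\top \nabla_s h$, which is proportional to $\Delta u$ by stationarity. The last row has $\theta$-block $\nabla_\theta h(s_{t+1},\theta) - (1-\gamma)\nabla_\theta h(s_t,\theta)$ and a zero $\lambda$-entry.

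The second step is to show that the $\theta$-block $\lambda B_s^\top \nabla^2_{s\theta} h$ has rank $k$. Since $\lambda>0$ by condition~1, the factor $\lambda$ does not affect rank. Condition~3 says $\nabla^2_{s\theta} h \in \mathbb{R}^{d\times k}$ is injective. Condition~2 should make $B_s^\top$ injective on the range of $\nabla^2_{s\theta} h$. This is the delicate point: ranks are not submultiplicative in the favorable direction, and $\operatorname{rank}(B_s)\ge k$ alone does not force $\operatorname{range}(\nabla^2_{s\theta}h)\cap\ker B_s^\top=\{0\}$. I would handle it by reading condition~2 as rank of the composite map, or, more naturally, by invoking the regime used throughout the paper, where $B_s$ is square and invertible as in \cref{thm:closed_form}. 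In that regime $B_s^\top$ is injective and the product has rank exactly $k$. I would state this interpretation explicitly in the proof.

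The third step adjoins the $\lambda$ column. It is not automatically independent of the $\theta$-columns: in the quadratic case the stationarity rows only determine $\theta$ along $\hat d$ up to the scale $\lambda$, which is exactly why \cref{thm:closed_form} needed the active constraint. I would argue as follows. Condition~1 gives the active constraint equation, which supplies the extra independent row that pins down the scale. One then checks that the $(k+1)\times(k+1)$ augmented system is nonsingular, or, more simply, eliminates $\lambda$ first, since $\lambda = \|\Delta u\|/\|B_s^\top\nabla_s h\|$ is determined by $\theta$, and applies the rank argument to the reduced map $\theta\mapsto$ the direction of $B_s^\top\nabla_s h$ together with the activity equation.

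I expect this last step, showing the combined Jacobian is nonsingular rather than merely that the $\theta$-block has rank $k$, to be the main obstacle. If the general argument is too loose, I would fall back on showing injectivity of $\theta\mapsto \lambda B_s^\top\nabla_s h$ modulo the known scale. Conditions~2–3 give this injectivity via the inverse function theorem. It would be combined with the transversality that the activity equation provides, which is the CBF contraction mechanism the paper emphasizes.
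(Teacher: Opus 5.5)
You follow the paper's own route: stationarity plus the binding CBF condition as $m+1$ equations in $(\theta,\lambda)$, the $\theta$-block $\lambda B_s^\top\nabla^2_{s\theta}h$, the implicit function theorem, and the binding row to fix the scale. Your worry in the second step is correct. The paper's proof simply asserts rank $k$ ``by Conditions 2--3'', and that does not follow. In fact the statement itself fails there. Take $d=4$, $m=k=2$, $s_{t+1}=s_t+[e_1\;e_2]u$ and $h=s_1^2+s_2^2+(s_3-\theta_1)^2+(s_4-\theta_2)^2-r^2$. Conditions 1--3 hold, but the filter depends on $\theta$ only through $\|(s_3,s_4)-\theta\|$. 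A whole circle of parameters therefore reproduces the observation. Assuming $B_s$ invertible is a strengthened hypothesis, not a reading of Condition~2.

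The genuine gap is your third step, which you flag but never close: the claim that the activity equation ``supplies the extra independent row''. Conditions 1--3 do not imply this, even with $B_s$ invertible. Take $d=k=m=1$, $s_{t+1}=s_t+u$ and $h(s,\theta)=e^{\theta}(1-s)$. Then $B_s=1$ and $\nabla^2_{s\theta}h=-e^{\theta}\neq 0$. Whenever $u_{\mathrm{nom}}>\gamma(1-s_t)$ the filter is active with $\lambda=-\Delta u\,e^{-\theta}>0$. But the CBF condition is invariant under positive rescaling of $h$, so for every $\theta$ it reads $u\le\gamma(1-s_t)$. Every $\theta\in\mathbb{R}$ therefore yields the same $u_{\mathrm{safe}}$, with $\lambda$ absorbing the scale. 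Your $F$ has Jacobian $\begin{bmatrix}-\lambda e^{\theta} & -e^{\theta}\\ 0 & 0\end{bmatrix}$, which has rank $1$. The curve $\theta(\lambda)$ given by the implicit function theorem lies entirely inside the binding set. The paper's sentence that ``the binding constraint provides the remaining scalar equation to determine $\lambda$'' has exactly this hole.

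What is missing is an explicit transversality hypothesis: the $(m+1)\times(k+1)$ Jacobian of $F$ has full column rank. Equivalently, $h(s_{t+1},\theta)-(1-\gamma)h(s_t,\theta)$ has nonzero derivative along the stationarity curve $\theta(\lambda)$. For the quadratic barrier with $B_s$ of rank $d$ this is automatic. Along $\theta=s_{t+1}-t\hat{d}$ the binding function is $At^2+Bt+C$ with $A>0$ and $C<0$, so its positive root is simple. One then checks that the kernel direction of the stationarity block, $\delta\theta\propto s_{t+1}-\theta$, is not annihilated by the binding row, so the bordered Jacobian has full column rank. Your argument can thus be completed in the setting of Corollary~\ref{cor:uniqueness}, but not for the theorem as stated.
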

\begin{proof}
The KKT stationarity~\eqref{eq:kkt_stationarity} and binding constraint form $m + 1$ equations in $k + 1$ unknowns $(\theta, \lambda)$. Differentiating the stationarity condition with respect to $\theta$ gives Jacobian $\lambda B_s^\top \nabla^2_{s\theta} h$, whose rank is at least $k$ by Conditions~2--3 (with $\lambda > 0$ from Condition~1). By the implicit function theorem, $\theta$ is locally determined as a function of $\lambda$, and the binding constraint provides the remaining scalar equation to determine~$\lambda$.
\end{proof}

\begin{corollary}[Global Uniqueness for Quadratic Barriers]
\label{cor:uniqueness}
Under Conditions~1--2 of Theorem~\ref{thm:identifiability}, if we have a quadratic barrier constraint \eqref{eq:barrier}, then Condition~3 holds automatically ($\nabla^2_{s\theta} h = -2Q$), and the recovered $\theta$ is globally unique.
\end{corollary}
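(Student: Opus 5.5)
The proof has two parts. First we verify Condition~3 by direct differentiation. Then we establish global uniqueness by reducing the inference to the scalar quadratic of Theorem~\ref{thm:closed_form}.

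\emph{Condition~3.} For $h(s,\theta) = (s-\theta)^\top Q (s-\theta) - r^2$ we have $\nabla_s h = 2Q(s-\theta)$. Differentiating in $\theta$ gives $\nabla^2_{s\theta} h = -2Q$. Since $Q \succ 0$, this matrix is invertible, so its rank is $d = k$ (here $\theta$ and $s$ live in the same space $\mathbb{R}^d$). Condition~3 therefore holds, and Theorem~\ref{thm:identifiability} already gives local identifiability.

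\emph{Global uniqueness.} Suppose two parameters $\theta_1,\theta_2$, with multipliers $\lambda_1,\lambda_2>0$, both satisfy the KKT stationarity~\eqref{eq:kkt_combined} and the binding CBF constraint for the same observation $(u_{\mathrm{nom}}, u_{\mathrm{safe}}, x_t)$. The plan proceeds in four steps.
\begin{enumerate}[nosep]
\item By Condition~1, $\Delta u \neq 0$.
\item By Condition~2 with $m=d$ (the square case used in Theorem~\ref{thm:closed_form}), $B_s$ is invertible. Inverting~\eqref{eq:kkt_combined} gives $s_{t+1} - \theta_j = \tfrac{1}{2\lambda_j} Q^{-1}(B_s^\top)^{-1}\Delta u$. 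Hence both $\theta_j$ lie on the ray $\theta_j = s_{t+1} - t_j \hat d$ with $t_j = \|Q^{-1}(B_s^\top)^{-1}\Delta u\|/(2\lambda_j) > 0$. The positive sign of $\lambda_j$ fixes the orientation of the ray, which rules out the opposite direction.
\item Substituting this parameterization into the binding constraint $h(s_{t+1},\theta) = (1-\gamma)h(s_t,\theta)$ gives $At^2 + Bt + C = 0$, with coefficients as in Theorem~\ref{thm:closed_form}.
\item We have $A = \gamma\,\hat d^\top Q \hat d > 0$ and $C = -(1-\gamma)\|e\|_Q^2 - \gamma r^2 < 0$. The product of the roots is $C/A < 0$, so exactly one root is positive. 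Thus $t_1 = t_2$ and $\theta_1 = \theta_2$, which is the closed-form solution.
\end{enumerate}

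\emph{Main obstacle.} The delicate point is step~2 when $m > d$, where $B_s^\top$ is not square. In that case we would argue as follows. Since $\operatorname{rank}(B_s) = d$, the map $B_s^\top$ is injective. So $\Delta u = 2\lambda B_s^\top Q(s_{t+1}-\theta)$ determines $Q(s_{t+1}-\theta)$ uniquely, via the left pseudo-inverse $(B_s B_s^\top)^{-1}B_s$, up to the scalar $\lambda$. The same ray argument then applies with $\hat d$ defined through this pseudo-inverse. We would also check that the stated expressions for $A$, $B$, $C$ are consistent with $Q$-weighted norms. This sign bookkeeping is routine, and the positivity of $A$ and negativity of $C$ are all the argument actually uses.
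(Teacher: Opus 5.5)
Your proposal is correct and follows the paper's argument: you check that $\nabla^2_{s\theta}h=-2Q$ has full rank, use linearity of $\nabla_s h$ in $\theta$ and $\lambda>0$ to put every KKT-consistent $\theta$ on the ray $\theta=s_{t+1}-t\hat d$ with $t>0$, and conclude from $A>0$, $C<0$ that there is exactly one positive root (via $C/A<0$ rather than Descartes' rule); your left-pseudo-inverse remark also usefully makes explicit the $m>d$ case of Condition~2, which the paper covers only implicitly through the invertibility hypothesis of Theorem~\ref{thm:closed_form}. One line is missing in step~1, because $\Delta u\neq 0$ does not follow from $\lambda>0$ alone: if $\Delta u=0$, injectivity of $B_s^\top Q$ forces $s_{t+1}=\theta$, and the binding constraint would then give $-r^2=(1-\gamma)h(s_t,\theta)\geq-(1-\gamma)r^2$, which is impossible for $\gamma,r>0$.
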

\begin{proof}
The linearity of $\nabla_s h = 2Q(s - \theta)$ in $\theta$ reduces the stationarity condition to a ray $\theta = s_{t+1} - t\hat{d}$ for $t > 0$. Substituting into the binding constraint yields the quadratic in Theorem~\ref{thm:closed_form}, whose sign structure $A > 0$, $C < 0$ guarantees exactly one positive root by Descartes' rule.
\end{proof}


\section{Inverting Safety Filters under Formation Constraints}
\label{sec:multi_agent}
The previous section considers constraint recovery from the perspective of an external observer. In a full $N$-agent system, however, agents must balance safety and inference, which complicates the problem since control deviations may arise from multiple active constraints (e.g., obstacle avoidance and formation keeping). In this section, we introduce a more realistic safety filter for multi-agent systems and show that the desired constraint can still be inferred via a provably convergent method.

\subsection{Formation Constraints}
\label{sec:formation}

To handle multi-agent formations, we enforce pairwise distance constraints between agents, which is common in applications such as swarming or cooperative manipulation:
\begin{equation}
\label{eq:formation}
    (d - \epsilon)^2 \leq \|s^i - s^j\|^2 \leq (d + \epsilon)^2,
\end{equation}
where $d$ is the target formation distance and $\epsilon > 0$ is the formation slack. Each pairwise distance is enforced via two inequality constraints: a lower bound $g_{\mathrm{lower}} = \|s^i - s^j\|^2 - (d-\epsilon)^2 \geq 0$ and an upper bound $g_{\mathrm{upper}} = (d+\epsilon)^2 - \|s^i - s^j\|^2 \geq 0$. Since only one formation constraint can be active at a time, we denote the active formation constraint as $g_{\mathrm{form}}(s_{t+1}) \geq 0$ and add it to the safety filter formulation~\eqref{eq:safety_filter}.


\subsection{Inference with Multiple Active Constraints}
\label{sec:multiple_constraints}

By adding additional constraints, the safety filter can no longer be inverted using the method of the previous section. Instead, we propose a provably convergent method to infer constraints from the new filter.

When both obstacle and formation constraints are active, the KKT stationary condition becomes:
\begin{equation}
\label{eq:kkt_multi}
    \Delta u = \lambda\, \nabla_u h(s_{t+1}, \theta) + \nu\, \nabla_u g_{\mathrm{form}},
\end{equation}
where $\lambda, \nu \geq 0$ are the obstacle and formation multipliers, respectively. The formation gradient for agent $i$ with respect to agent $j$ is:
\begin{equation}
    \nabla_u g_{\mathrm{form}} = 2 B_s^\top (s_{t+1}^i - s_{t+1}^j) \eqqcolon 2 B_s^\top f.
\end{equation}
Note that the sign of the gradient will depend on whether the lower or upper constraint is active.

Denoting $c = s_{t+1} - \theta$, this yields a nonlinear system $F(x) = 0$ with $x = [\theta^\top, \lambda, \nu]^\top$:
\begin{equation}
\label{eq:newton_system}
F(x) = \begin{bmatrix}
\Delta u - 2\lambda B_s^\top Q\, c - 2\nu B_s^\top f \\
c^\top Q\, c {-} (1{-}\gamma)(s_t {-} \theta)^\top Q (s_t {-} \theta) {-} \gamma r^2
\end{bmatrix} = 0
\end{equation}

Row~1 provides $m$ stationarity equations; Row~2 provides the active constraint equation. With $m = d$ (fully actuated), the system has $m + 1$ independent equations for $d + 2$ unknowns ($\theta$, $\lambda$, $\nu$), leaving one degree of freedom.

\subsection{A Provably Convergent Newton Solver for Constraint Inference}
\label{sec:newton_convergence}

We solve the underdetermined system of equations \eqref{eq:newton_system} via the regularized least-squares problem:
\begin{equation}
\label{eq:regularized_newton}
    \min_x \;\mathcal{L}(x) \coloneqq \tfrac{1}{2}\|F(x)\|^2 + \tfrac{\mu}{2}\|x - x_0\|^2,
\end{equation}
where $x_0$ is the initialization from the closed-form obstacle-only solution (Theorem~\ref{thm:closed_form}) and $\mu > 0$ is a regularization parameter. The first order optimality condition of \eqref{eq:regularized_newton} is
\begin{equation}
\label{eq:newton_optimality}
    \pi(x) \coloneqq \nabla_x F(x)^\top F(x) + \mu(x - x_0) = 0.
\end{equation}
We utilize an inexact Newton method to find the solution to $\pi(x)=0$. At iteration $k$, we form a first-order Taylor expansion of $\pi$ around the current iterate $x_k$, yielding the linear system:
\begin{equation}
    \nabla \pi(x_k)\,\Delta x_k = -\pi(x_k).
\end{equation}
The resulting direction $\Delta x_k$ is used to update the iterate via
\begin{equation}
    x_{k+1} = x_k + \alpha_k \Delta x_k,
\end{equation}
where the step size $\alpha_k \in (0,1]$ is chosen using a standard backtracking line search to ensure sufficient decrease in the residual norm $\|\pi(x)\|$. We now provide conditions in which Newton's method converges to the true constraint parameter.

\begin{theorem}[Newton Convergence]
\label{thm:newton_convergence}
Let $\mathcal{S} = \{x : \mathcal{L}(x) \leq \mathcal{L}(x_0)\}$ be the sublevel set of the initial iterate.
Suppose $\mu > M F_{\max}$ where $M = \max_i \|\nabla^2 F_i\|$ and
$F_{\max} = \max_{x \in \mathcal{S}} \|F(x)\|_1$. Then Newton iterates
on~\eqref{eq:newton_optimality} with backtracking line search satisfy
$\|\pi(x_k)\| \to 0$.
\end{theorem}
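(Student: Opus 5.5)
The plan is to show that, under $\mu > M F_{\max}$, the Jacobian $\nabla\pi(x)$ is uniformly positive definite on $\mathcal{S}$. Newton's method with backtracking is then a globally convergent descent method, both for $\mathcal{L}$ and for the merit function $\tfrac12\|\pi\|^2$.

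\textbf{Step 1: Hessian of $\mathcal{L}$.} Since $\pi = \nabla \mathcal{L}$, we have $\nabla\pi = \nabla^2\mathcal{L}$. Writing $J = \nabla_x F$, this is
\begin{equation*}
\nabla\pi(x) = J(x)^\top J(x) + \sum_i F_i(x)\,\nabla^2 F_i(x) + \mu I .
\end{equation*}
The Gauss--Newton term $J^\top J$ is positive semidefinite. For the curvature term,
\begin{equation*}
\Bigl\|\sum_i F_i \nabla^2 F_i\Bigr\| \le \sum_i |F_i|\,\|\nabla^2F_i\| \le M\|F(x)\|_1 .
\end{equation*}
For $x\in\mathcal{S}$ this is at most $M F_{\max}$. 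Hence $\nabla\pi(x) \succeq (\mu - M F_{\max}) I =: \sigma I$ with $\sigma>0$ on $\mathcal{S}$. In particular $\nabla\pi(x_k)$ is invertible, so the Newton direction is well defined. Since $F$ is polynomial in $(\theta,\lambda,\nu)$, each $\nabla^2F_i$ is bounded: the $\theta$-blocks are constant and the mixed blocks are constant matrices like $B_s^\top Q$. So $M$ is finite.

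\textbf{Step 2: iterates stay in $\mathcal{S}$.} Because $\mu>0$, the term $\tfrac{\mu}{2}\|x-x_0\|^2$ makes $\mathcal{L}$ coercive, so $\mathcal{S}$ is compact. This justifies taking the max in $F_{\max}$ and gives a Lipschitz constant $L$ for $\nabla\pi$ on a neighborhood of $\mathcal{S}$. The direction $\Delta x_k = -\nabla\pi(x_k)^{-1}\pi(x_k)$ satisfies
\begin{equation*}
\nabla\mathcal{L}(x_k)^\top\Delta x_k = -\pi^\top(\nabla\pi)^{-1}\pi \le -\|\pi\|^2/\|\nabla\pi\| < 0 .
\end{equation*}
It is therefore a descent direction for $\mathcal{L}$. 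I will argue that the backtracking step can be taken to enforce decrease of $\mathcal{L}$, or alternatively of $\tfrac12\|\pi\|^2$, whose gradient along the Newton direction is exactly $-\|\pi\|^2$. Either way, induction keeps $x_k\in\mathcal{S}$, provided the segment $[x_k,x_{k+1}]$ stays in $\mathcal{S}$. To handle this I will work on a slightly larger compact set where the Step 1 bound still holds by continuity, or simply assume the strict inequality leaves slack.

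\textbf{Step 3: sufficient decrease and summability.} Using the merit function $\phi=\tfrac12\|\pi\|^2$ with Armijo backtracking, Lipschitz continuity of $\nabla\pi$ and the bound $\|\Delta x_k\|\le\|\pi(x_k)\|/\sigma$ give a uniform lower bound $\alpha_k\ge\bar\alpha>0$. This yields
\begin{equation*}
\phi(x_{k+1}) \le (1 - c\,\bar\alpha)\,\phi(x_k).
\end{equation*}
Hence $\|\pi(x_k)\|\to0$, in fact linearly. If instead the line search is on $\mathcal{L}$, the Zoutendijk-type argument applies. The angle between $-\Delta x_k$ and $\pi$ is bounded away from $90^\circ$, since $\nabla\pi$ has eigenvalues in $[\sigma,\Sigma]$ on compact $\mathcal{S}$. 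Together with $\mathcal{L}\ge0$ this gives $\sum_k\|\pi(x_k)\|^2<\infty$, and the same conclusion follows.

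The main obstacle is Step 2's consistency issue. The condition $\mu > M F_{\max}$ only controls the Hessian on $\mathcal{S}$, so I must guarantee the iterates, and the line-search trial segments, never leave $\mathcal{S}$. This requires tying the line-search criterion to decrease of $\mathcal{L}$ rather than only of $\|\pi\|$. I would first try to show that for $\phi$-decreasing steps with small enough $\alpha$, $\mathcal{L}$ also decreases, using the descent inequality above. Failing that, I would state the line search on $\mathcal{L}$ and use the Zoutendijk route.
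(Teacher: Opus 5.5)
Your ingredients are exactly the paper's: the bound $\nabla\pi \succeq (\mu - MF_{\max})I$ on $\mathcal{S}$, compactness of $\mathcal{S}$ from coercivity, and Lipschitz continuity of the polynomial $\nabla\pi$. The difference is the last step. The paper just invokes the infeasible-start Newton analysis of Boyd and Vandenberghe (Section 10.3.3); you run the line-search argument yourself, and your version is the more careful one.

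That cited analysis backtracks on $\|\pi\|$ and needs a bounded inverse of $\nabla\pi$ on the \emph{residual} sublevel set $\{x : \|\pi(x)\| \le \|\pi(x_0)\|\}$. The paper, like you, only controls $\nabla\pi$ on the $\mathcal{L}$-sublevel set $\mathcal{S}$. The residual sublevel set contains every critical point of $\mathcal{L}$, possibly including saddles or maxima with $\mathcal{L} > \mathcal{L}(x_0)$. So nothing in a residual-norm search is shown to keep the iterates where $\nabla\pi \succ 0$ is known.

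Your fallback, Armijo backtracking on $\mathcal{L}$ itself, closes this gap. Monotonicity of $\mathcal{L}$ keeps $x_k \in \mathcal{S}$, so the eigenvalues of $\nabla\pi(x_k)$ lie in $[\sigma,\Sigma]$. The usual backtracking estimate then gives $\mathcal{L}(x_k) - \mathcal{L}(x_{k+1}) \ge c'\|\pi(x_k)\|^2$, and since $\mathcal{L} \ge 0$ you get $\sum_k \|\pi(x_k)\|^2 < \infty$.

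This route has two costs. First, you prove the result for a search on $\mathcal{L}$, not the pure residual-norm search described before the theorem. A combined test demanding sufficient decrease of both $\mathcal{L}$ and $\|\pi\|$ also works, since both hold for all steps below a uniform threshold. Second, you get no rate, whereas the cited analysis would give damped-then-quadratic convergence if its hypotheses held.

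Two fixes to your write-up. The Step 2 claim that ``either way'' the iterates stay in $\mathcal{S}$ is false for the $\phi$-route, because decrease of $\|\pi\|$ says nothing about $\mathcal{L}$. So the linear rate in Step 3 is unsupported unless you add the $\mathcal{L}$-decrease test. Conversely, your worry about segments leaving $\mathcal{S}$ is unnecessary. Positive definiteness is only needed at accepted iterates, and the step-size bound only needs $\pi$ Lipschitz on a bounded set containing the trial segments. That holds because $\pi$ is polynomial and $\|\Delta x_k\| \le \|\pi(x_k)\|/\sigma \le \max_{\mathcal{S}}\|\pi\|/\sigma$, so no enlarged set or ``slack'' argument is required.
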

\begin{proof}
The objective $\mathcal{L}$ is coercive since $\mu > 0$, so the
sublevel set $\mathcal{S}$ is compact (and closed). On $\mathcal{S}$, $\nabla\pi = J^\top J + \mu I + \sum_i F_i \nabla^2 F_i \succeq (\mu - M
F_{\max})I \succ 0$, giving a bounded inverse. Since $\nabla\pi$ is polynomial, it is Lipschitz on $\mathcal{S}$. The result follows from Section 10.3.3 in Boyd and Vandenberghe \cite{boyd_convex_2023}.
\end{proof}

In practice, additional constraints (e.g., formation keeping) can produce control deviations even without an active obstacle constraint. To avoid false inferences, we reject deviations below a noise threshold $\|\Delta u\| < \epsilon_{\Delta u}$, check whether the deviation is fully explained by formation constraints alone, and verify the inferred parameter is not a known obstacle. After these checks, we select the closed-form solution or the Newton solver based on the number of active constraints.

\begin{algorithm}[t]
\caption{Decentralized Planning with Inference}
\label{alg:planning}
\begin{algorithmic}[1]
\Require Joint initial state $x_1$, obstacle beliefs $\Theta^{1:N}$, planning horizon $T$, and task execution horizon $T_e$
\While{$t\le T_e$}
    \For{each demonstrating agent $i \in [N]$}
        \For{each learning agent $j$ at $t-1$}\label{alg:start get nominal control for inference} 
        \State Let $k$ be the demonstrating agent at $t-1$
            \If{$u^k_{\mathrm{safe},t-1} \neq u^k_{\mathrm{nom},t-1}$} 
                \State $\hat{\theta} \!\gets\! \mathrm{Inference}(u_{\mathrm{nom},t-1}^k, u_{\mathrm{safe,t-1}}^k, x_{t-1})$ 
                \State $\Theta^j \gets \Theta^j \cup \{ \hat{\theta} \}$ \label{alg:end get nominal control for inference}
            \EndIf
        \EndFor
        \State $u_{\mathrm{nom},t} \gets \mathrm{RecedingHorizonPlanner}(x_{t}, T)$\label{alg:start_plan_action}
        \State $u_{\mathrm{safe,t}}^i \gets \mathrm{SafetyFilter}(x_t, u_{\mathrm{nom}, t}, \Theta^i)$ 
        \For{each learning agent $j$ at $t$} 
            \State $u_{\mathrm{safe},t}^j \gets \mathrm{SafetyFilter}(x_t, u_{\mathrm{nom}, t}, \cap_{k=1}^N \Theta^k)$\label{alg:end_plan_action}
        \EndFor
        \State $x_{t+1} \gets f(x_t) + g(x_t) \cdot u_{\mathrm{safe},t}$ 
        \State $t \gets t+1$
    \EndFor
\EndWhile
\Statex
\Procedure{RecedingHorizonPlanner}{$x_1, T$}
    \State Solve the $N$-agent game problem over horizon $T$:
    \State \quad $\mathbf{x}^*,\mathbf{u}^* \gets \arg\min_{\mathbf{u},\mathbf{x}}  J^i(\mathbf{x}, \mathbf{u})$ \quad s.t. dynamics \eqref{eq:dynamics}
    \State \Return $u_1^*$ \\\Comment{Output the first step control as the nominal control}
\EndProcedure
\end{algorithmic}
\end{algorithm}

\section{Decentralized Round-Robin-Style Inference And Planning}

In this section, we introduce a method to run inference in decentralized multi-agent systems while guaranteeing safety, both against static and moving obstacles.

To solve the safety filter with formation constraints, agent $i$ must know the \textit{next} position of other agents. However, when agents simultaneously avoid private obstacles, determining the next position of other agents is not always possible. To address this, we use a round-robin execution framework where the team is split into one \textbf{demonstrating} agent and $N-1$ \textbf{learning} agents at each time step.
\begin{assumption}[Role Awareness]
We assume that all agents know who the demonstrating and learning agents are at every time step.
\end{assumption}
\subsection{Actions of Demonstrating and Non-Demonstrating Agents}

Each agent $i$ has knowledge of a set of constraints $\Theta^i$. A constraint is \emph{private} if it is known to a strict subset of agents and \emph{public} if it is known to all agents after demonstration. We assume all agents know the set of public constraints: $\cap_i^{N} \Theta^i$.

At each time step of the round-robin framework, to make the decentralized control and inference problem tractable, only the demonstrating agent acts on its private information, while learning agents act solely on public information (e.g., avoiding the nearest demonstrated obstacle while maintaining formation constraints). We summarize this framework in Algorithm~\ref{alg:planning}: lines \ref{alg:start get nominal control for inference}–\ref{alg:end get nominal control for inference} perform constraint inference using data from the previous time step, while lines \ref{alg:start_plan_action}–\ref{alg:end_plan_action} plan safe actions for the current time step.
The $\mathrm{Inference}$ procedure utilizes Theorem \ref{thm:closed_form} or Theorem \ref{thm:newton_convergence}, depending on the number of active constraints. The $\mathrm{SafetyFilter}$ procedure solves the safety filter problem \eqref{eq:safety_filter} for the closest obstacle in the corresponding agent’s constraint set, together with the formation constraints~\eqref{eq:formation}.

We establish the safety guarantee of our method as follows.


\begin{theorem}[Decentralized Safety Guarantee]
\label{thm:decentralized_safety}
Consider $N$ agents executing Algorithm~\ref{alg:planning} with $t > N$,
circular CBF constraints ($Q = I$, decay rate
$\gamma$) and formation constraints~\eqref{eq:formation} with target
distance $d$ and slack $\varepsilon$. Suppose: (1) demonstrating agents enforce the CBF constraint with \emph{inflated} radius $r_{\mathrm{demo}} = r + (d + \varepsilon)$. (2) Every agent's initial state satisfies the formation constraints, and lies in the forward invariant set~\cite{wabersich_data-driven_2023} of $h(s_0^i, \theta) \geq 0$, for each ground truth constraint's parameter $\theta$, where $h$ uses $r_{\mathrm{demo}}$. Then $\|s_t^i - \theta\| \geq r$ for all agents $i$, all $t$, and all ground truth constraints' parameters.
\end{theorem}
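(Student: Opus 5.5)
The plan is to prove safety separately for the demonstrating agent and for the learning agents at each time step, and to tie them together with the formation constraint and the triangle inequality. The guiding idea is that the inflated radius $r_{\mathrm{demo}} = r + (d+\varepsilon)$ leaves a buffer of width $d+\varepsilon$ around every ground-truth obstacle. Any agent within formation distance of an agent outside the inflated disk is therefore outside the true disk.

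\textbf{Step 1 (forward invariance for a demonstrating agent).} Fix a ground-truth parameter $\theta$ and let $h_{\mathrm{demo}}(s,\theta) = \|s-\theta\|^2 - r_{\mathrm{demo}}^2$. Whenever agent $i$ demonstrates and knows $\theta$, its filter enforces \eqref{eq:cbf_condition}. This gives $h_{\mathrm{demo}}(s^i_{t+1},\theta) \ge (1-\gamma)h_{\mathrm{demo}}(s^i_t,\theta)$. Since $\gamma\in(0,1)$, induction on $t$ shows that nonnegativity propagates, using assumption (2) as the base case. So $\|s^i_t-\theta\|\ge r_{\mathrm{demo}}$ at every step at which the filter is applied to $\theta$.

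One subtlety must be handled first. The filter only treats the \emph{closest} known obstacle. I would argue, as is standard, that the closest obstacle is the binding one and that the other constraints remain satisfied by the invariance hypothesis in assumption (2). This hypothesis is taken for every ground-truth $\theta$, so it does the work here rather than any extra computation.

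\textbf{Step 2 (learning agents via formation).} At any time, a learning agent $j$ satisfies the formation constraint \eqref{eq:formation} with the current demonstrator $i$, since formation constraints are enforced in every filter and hold initially by assumption (2). Hence $\|s^j_t - s^i_t\| \le d+\varepsilon$. The triangle inequality then gives
\[
\|s^j_t-\theta\| \ge \|s^i_t-\theta\| - \|s^i_t-s^j_t\| \ge r + (d+\varepsilon) - (d+\varepsilon) = r .
\]
This applies to every obstacle the demonstrator is enforcing.

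\textbf{Step 3 (round-robin and inference).} I would use the hypothesis $t>N$ to argue that every agent has demonstrated at least once. After that point, Algorithm~\ref{alg:planning} has inferred each private obstacle the demonstrator reacted to, exactly by Theorem~\ref{thm:closed_form} or Theorem~\ref{thm:newton_convergence}. Such obstacles therefore lie in the public set $\cap_k\Theta^k$. Obstacles known to the current demonstrator are covered by Steps 1--2. Obstacles already public are enforced by every agent, and the same invariance argument applies to them.

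\textbf{Main obstacle.} The hardest part is the bookkeeping across role switches. The demonstrator changes every step, so I must ensure that at every $t$ some agent enforces the inflated barrier for each ground-truth $\theta$, and that every learning agent is formation-coupled to that agent. Formation is pairwise, so a learning agent may only be coupled to the demonstrator indirectly. If that happens, I would fall back on assuming direct pairwise formation constraints with the demonstrator, as in the two-agent running example.

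A second point to justify is that the inflated invariant set is never exited while agent $i$ is \emph{not} demonstrating. For that I would rely on the assumption in (2) that the state lies in the forward invariant set for every ground-truth $\theta$. I would then argue that the public-obstacle filter preserves this property once the obstacle has been inferred.
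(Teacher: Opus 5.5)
Your proposal follows the paper's proof. CBF forward invariance seeded by Condition~(2) keeps the demonstrating agent at distance at least $r+(d+\varepsilon)$ from every ground-truth $\theta$. The formation bound $\|s_t^j-s_t^{\mathrm{demo}}\|\le d+\varepsilon$ and the triangle inequality then give $\|s_t^j-\theta\|\ge r$ for every learner, even before inference, and $t>N$ plus inference adds the demonstrated obstacles to the learners' constraint sets.

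The paper does not resolve the subtleties you flag (role switching, filtering only the closest obstacle, indirect formation coupling) either. It reads Condition~(2) as guaranteeing the demonstrator's inflated safety at every $t$, and it assumes a direct pairwise formation constraint with the demonstrator; these are exactly your fallbacks.
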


\begin{proof}
The condition $t > N$ ensures each agent has demonstrated once, enabling others to learn its constraints. By Condition~2 and CBF forward invariance, at each time $t$, the demonstrating agent's state $s_t^\mathrm{demo}$ satisfies $\|s_t^{\mathrm{demo}} - \theta\| \geq r_{\mathrm{demo}} = r + (d + \varepsilon)$ for all $t$.
For each learning agent $j$, the formation constraint (Condition~1) to the demonstrating agent and triangle inequality give:
\begin{equation}
    \begin{split}
        \|s_t^{j} - \theta\| &\geq \|s_t^{\mathrm{demo}} - \theta\| - \|s_t^{j} - s_t^{\mathrm{demo}}\| \\
        &\geq (r + d + \varepsilon) - (d + \varepsilon) = r.
    \end{split}
\end{equation}
Thus, learners remain at distance $\geq r$ from all unsafe sets at every timestep, even before inference. When the demonstrating agent's CBF activates (producing $\Delta u \neq 0$), the learning agents infer $\theta$ via Theorem~\ref{thm:closed_form} or~\ref{thm:newton_convergence} and add it to their constraint sets. At inference, the learning agent $j$ satisfies $\|s_t^{j} - \theta\| \geq r$, so the learners start in the safe set of their new constraint and forward invariance holds thereafter.
\end{proof}

\subsection{Generalizing to Moving Obstacles and Multi-Team Interactions}

In addition to avoiding static obstacles, we consider the case where multiple teams of agents are in the same environment. Agents within a team can treat other teams as moving obstacles and solve a robust safety filter to ensure safety. We assume that the demonstrating agent knows the position and velocity of the obstacle.

We assume there are $K$ teams. When an agent in team $k\in[K]$ observes team $l \in [K]$, team $l$'s center of agents is treated as a moving obstacle with bounded velocity $\|v_{\mathrm{obs}}\| \leq v_{\max}$. Let $p_t^k$ denote team $k$'s position and $p_t^l$ the observed position of the other team's agent. Note that in this case, the constraint-relevant state is the position of an agent, meaning the constraint is a function of $p$ and $\theta$. The next-step position of team $l$ is unknown from agent team $k$'s perspective:
\begin{equation}
    p_{t+1}^l = p_t^l + \Delta t \cdot v_t^l, \quad \|v_t^l\| \leq v_{\max}.
\end{equation}

Safety requires that the CBF condition holds for the \emph{worst-case} obstacle motion:
\begin{equation}
\label{eq:robust_cbf}
    \min_{\|v_t^l\| \leq v_{\max}} h(p_{t+1}^k,\, p_{t+1}^l(v_t^l)) \;\geq\; (1-\gamma)\,h(p_t^k,\, p_t^l).
\end{equation}

\begin{proposition}[Robust CBF via Worst-Case Velocity]
\label{thm:robust_cbf}
Let $h(p, \theta) = \|p - \theta\|^2 - r^2$ be a circular barrier ($Q = I$) and let $c = p_{t+1}^k - p_t^l$ be the predicted displacement between teams. The worst-case obstacle velocity that minimizes the left-hand side of~\eqref{eq:robust_cbf} is
\begin{equation}
\label{eq:worst_case_v}
    v_t^{l*} = \begin{cases}
        v_{\max} \cdot \dfrac{c}{\|c\|} & \text{if } \|c\| > \Delta t \cdot v_{\max}, \\[6pt]
        \dfrac{c}{\Delta t} & \text{otherwise}.
    \end{cases}
\end{equation}
In the first case, the resulting worst-case barrier value is $h_{\min} = (\|c\| - \Delta t\, v_{\max})^2 - r^2$. In the second case, $h_{\min} = -r^2 < 0$ (the obstacle can reach the agent).
\end{proposition}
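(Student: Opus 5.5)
The plan is to reduce the inner minimization in \eqref{eq:robust_cbf} to a Euclidean projection onto a ball. With $Q=I$, we have $h(p_{t+1}^k, p_{t+1}^l) = \|p_{t+1}^k - p_{t+1}^l\|^2 - r^2$. Using the obstacle model $p_{t+1}^l = p_t^l + \Delta t\, v_t^l$ and the definition $c = p_{t+1}^k - p_t^l$, the relative displacement becomes $c - \Delta t\, v_t^l$. Since $p_{t+1}^k$ depends only on team $k$'s own control and not on $v_t^l$, the left-hand side of \eqref{eq:robust_cbf} equals
\begin{equation*}
\min_{\|v\| \le v_{\max}} \|c - \Delta t\, v\|^2 - r^2 .
\end{equation*}
The substitution $w = \Delta t\, v$ turns this into minimizing $\|c - w\|^2$ over the closed ball $\{w : \|w\| \le \Delta t\, v_{\max}\}$. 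This is a strictly convex problem, so it has a unique minimizer $w^*$, namely the projection of $c$ onto the ball, and $v_t^{l*} = w^*/\Delta t$.

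I would then split into the two cases of \eqref{eq:worst_case_v}.

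\emph{Case $\|c\| > \Delta t\, v_{\max}$.} The lower bound comes from the reverse triangle inequality: for any feasible $w$,
\begin{equation*}
\|c - w\| \ge \|c\| - \|w\| \ge \|c\| - \Delta t\, v_{\max} > 0 .
\end{equation*}
Squaring preserves the inequality because both sides are nonnegative. The choice $w = \Delta t\, v_{\max}\, c/\|c\|$ is feasible and gives $c - w = (1 - \Delta t\, v_{\max}/\|c\|)\,c$, whose norm is exactly $\|c\| - \Delta t\, v_{\max}$. So this $w$ attains the bound. That yields $v_t^{l*} = v_{\max}\, c/\|c\|$ and $h_{\min} = (\|c\| - \Delta t\, v_{\max})^2 - r^2$. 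Uniqueness follows from strict convexity, or directly from the equality conditions in the triangle inequality.

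\emph{Case $\|c\| \le \Delta t\, v_{\max}$.} Here $v = c/\Delta t$ satisfies $\|v\| \le v_{\max}$, so it is feasible. It makes $c - \Delta t\, v = 0$, which achieves the global lower bound $0$ of $\|c - w\|^2$. Hence $h_{\min} = -r^2 < 0$: the obstacle can reach the agent's predicted position, and the robust CBF cannot be certified.

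The main subtlety is not computational. One must confirm that $p_{t+1}^k$, and hence $c$, is independent of the adversarial variable, so that the minimization decouples as described. One must also justify equality attainment in the triangle inequality, which requires $w$ to be a nonnegative multiple of $c$ with maximal norm. The degenerate case $c = 0$ falls into the second case, so the normalization $c/\|c\|$ in the first case is always well defined.
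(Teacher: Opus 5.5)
Your proposal is correct and follows essentially the same route as the paper's proof: both reduce the inner minimization to projecting onto a ball of radius $v_{\max}$ (you rescale to $w=\Delta t\, v$) and split on whether $c/\Delta t$ is feasible. Your reverse-triangle-inequality argument, together with the checks that $c$ does not depend on $v$ and that $c=0$ falls in the second case, makes explicit the projection formula the paper simply asserts.
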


\begin{proof}
The worst-case velocity minimizes $\|c - \Delta t\, v\|^2$ over $\|v\| \leq v_{\max}$. This is the projection of $c / \Delta t$ onto the ball of radius $v_{\max}$: if $\|c\|/\Delta t \leq v_{\max}$, the unconstrained minimizer $v = c/\Delta t$ is feasible and achieves zero residual; otherwise, $v^* = v_{\max} \cdot c / \|c\|$ moves directly toward the agent at maximum speed, giving residual $\|c\| - \Delta t\, v_{\max}$.
\end{proof}

A simpler alternative to solving the robust condition is to replace the robust CBF with a standard CBF using an \emph{inflated safety radius} that conservatively accounts for worst-case obstacle motion following the approach in \cite{borrmann_control_2015}:
\begin{equation}
\label{eq:inflated_radius}
    r_{\mathrm{robust}} = r_{\mathrm{safe}} + \Delta t \cdot v_{\max}.
\end{equation}
The demonstrating agent then solves the standard safety filter~\eqref{eq:safety_filter} with $r_{\mathrm{robust}}$ in place of $r$. In practice, we can utilize $r_\mathrm{robust}$ in the safety filter of Algorithm \ref{alg:planning} and inherit the robust CBF properties of avoiding moving obstacles.

\section{Experiments and Results}

We test our approach in both simulation and hardware to quantify inference accuracy and planning safety. Each experiment uses iLQGames \cite{fridovich2020efficient} as a nominal trajectory planner and circular CBF constraints for obstacle avoidance. The demonstrating agent filters its action according to the nearest obstacle in its private and public information while the learning agents filter their actions based solely on public information. We evaluate the following hypotheses:

\textit{\textbf{H1}. Our inference method and constraint formulation lead to a higher success rate than standard methods.} 

These ``standard'' methods include using a non-CBF constraint to enforce obstacle avoidance and optimizing over the forward filter as discussed in Equation \eqref{eq:input_matching}.

\textit{\textbf{H2}. Our method generalizes and scales to multi-agent interaction and multiple teams with decentralized inference and planning.}

\textit{\textbf{H3}. Our method provides a larger region of convergence and lower inference error than baseline methods.}

\subsection{Hypothesis 1: Our inference method and constraint formulation lead to a higher success rate than standard methods.}

We utilize a Monte-Carlo style rollout of $100$ trajectories and evaluate them based on the number of collisions, false inferences (ghosts), inference error to the true constraint, and the discovery rate of true constraints.

Each scenario has a two-agent team navigating through a 2D environment with $2$ or $3$ randomly placed obstacles. Each agent is modeled as a 2D, double integrator with a formation constraint and an obstacle avoidance constraint. We compare our inference method, labeled KKT in the figure, against an Input Matching (IM) approach where a solver tries to match the observed $u_\mathrm{safe}$ action by optimizing directly over the $\mathrm{SafetyFilter}$ constraint parameter argument:
\begin{equation}
    \label{eq:input_matching}
    \begin{split}
        \min_\theta || u_\mathrm{safe}  - \textrm{SafetyFilter}(x, u_\mathrm{nom}, \theta)||_2^2
    \end{split}
\end{equation}
To solve the IM optimization problem, we utilize a Nelder-Mead optimizer with 100 iterations to keep it within real-time planning constraints. For obstacle avoidance, we compare our CBF constraint against a non-CBF circular constraint in Equation \eqref{eq:barrier}, labeled Circle in the figure.

\begin{figure}[t]
    \centering
    \includegraphics[width=1.0\linewidth]{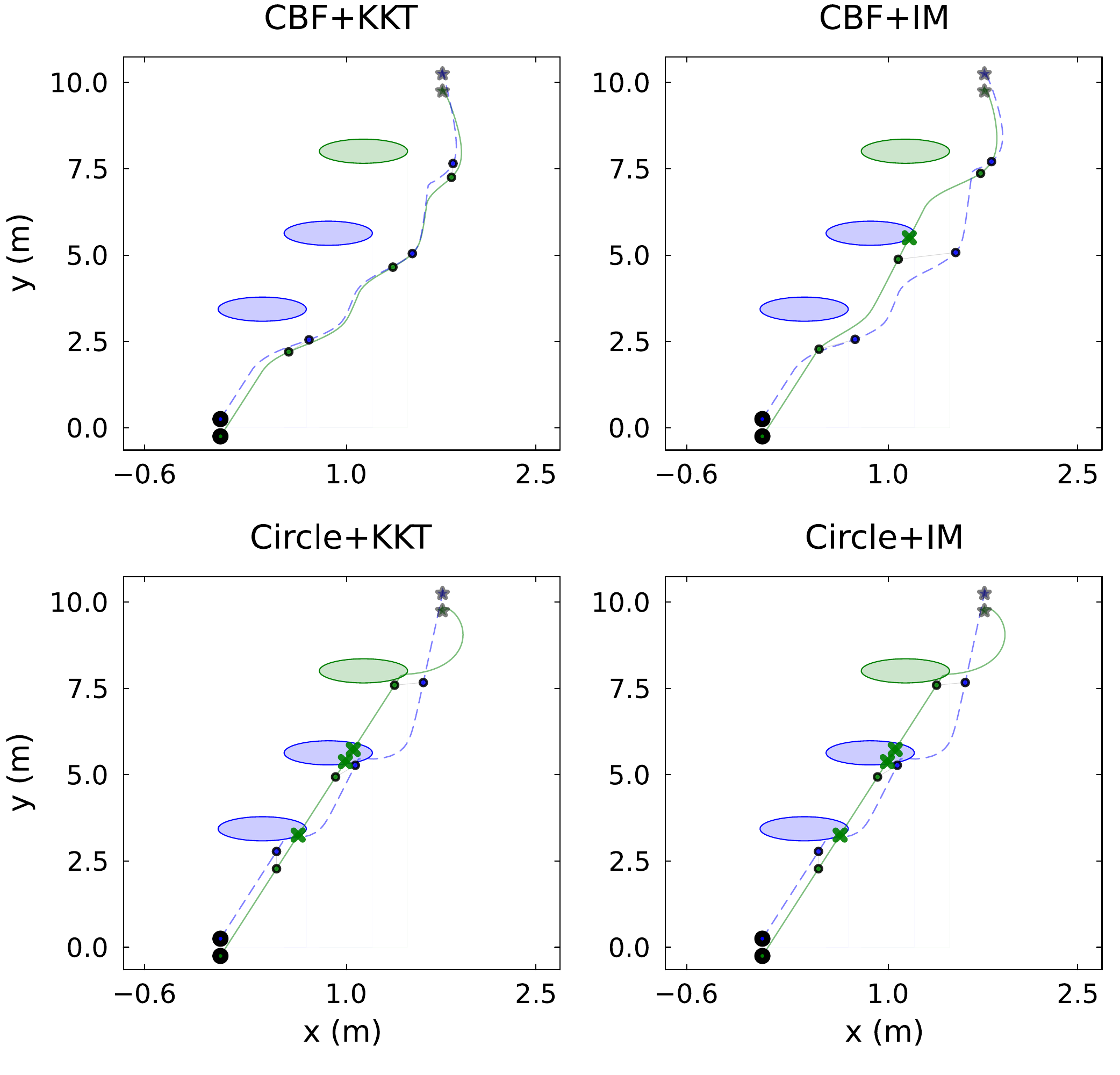}\vspace{-0.8em}
    \caption{A single trial from a Monte-Carlo rollout where we generate random obstacles and goal positions for two double integrator agents connected by a formation constraint. We compare two inference methods (KKT vs. IM) and two obstacle-avoidance formulations (CBF vs. Circle). We observe that CBF + KKT is able to avoid all collisions while the other methods have at least one collision.}
    \label{fig:mc_twoagent}
\end{figure}

{\setlength{\tabcolsep}{2.5pt}
\begin{table}[t]
\centering
\caption{Monte Carlo 100-Rollout Results}
\label{tab:results}
\begin{tabular}{lccccc}
\toprule
Configuration & Collisions $\downarrow$ & Ghosts $\downarrow$ & Error $\downarrow$ & Discovery $\uparrow$ \\
\midrule
\makecell[l]{CBF+KKT (Ours)} & $\mathbf{0.1 \pm 0.6}$ & $\mathbf{0.0 \pm 0.0}$ & $\mathbf{0.001 \pm 0.008}$ & $\mathbf{90\%}$ \\
CBF+IM & $0.4 \pm 1.8$ & $7.4 \pm 8.5$ & $1.497 \pm 1.235$ & $80\%$ \\
Circle+KKT & $5.6 \pm 3.7$ & $0.0 \pm 0.0$ & $0.002 \pm 0.011$ & $19\%$ \\
Circle+IM & $6.7 \pm 3.7$ & $0.0 \pm 0.0$ & --- & $0\%$ \\
\bottomrule
\end{tabular}
\end{table}
}

The trials show that with the same CBF obstacle avoidance constraints, the KKT method and IM method were both able to avoid collisions. Our method achieves significantly lower position error and a higher discovery rate than the IM approach. When using the Circle constraint formulation, both inference methods struggle. This is mainly due to the demonstrating agent only avoiding its obstacle after the learning agent has already collided, as seen at the bottom left of Figure \ref{fig:mc_twoagent}. The KKT inference method still outperforms IM even with these constraints, inferring $19\%$ of obstacles versus $0\%$ for IM across all trials.

\subsection{Hypothesis 2: Our method generalizes and scales to multiple agent teams and multiple teams in one environment.}

In addition to two-agent teams, we consider teams of three and four agents and demonstrate the ability of our round-robin framework to maintain formation constraints between larger teams of agents while still avoiding collisions.

In Figure \ref{fig:large_team_experiment}, each agent is modeled as a 2D, double integrator and has pairwise formation constraints to every other agent in the environment. The figure shows how each agent is able to demonstrate and learn from each other's constraints while avoiding collisions.

Next, we consider the case when we have multiple teams in the same environment. We utilize the radius described in \eqref{eq:inflated_radius} to ensure we do not collide with other, moving teams. To make sure agents stay below $v_\mathrm{max}$, we introduce another constraint in the safety filter on the velocity of the agents. We do not run inference when this constraint is active.

\begin{figure}
    \centering
    \includegraphics[width=1.0\linewidth]{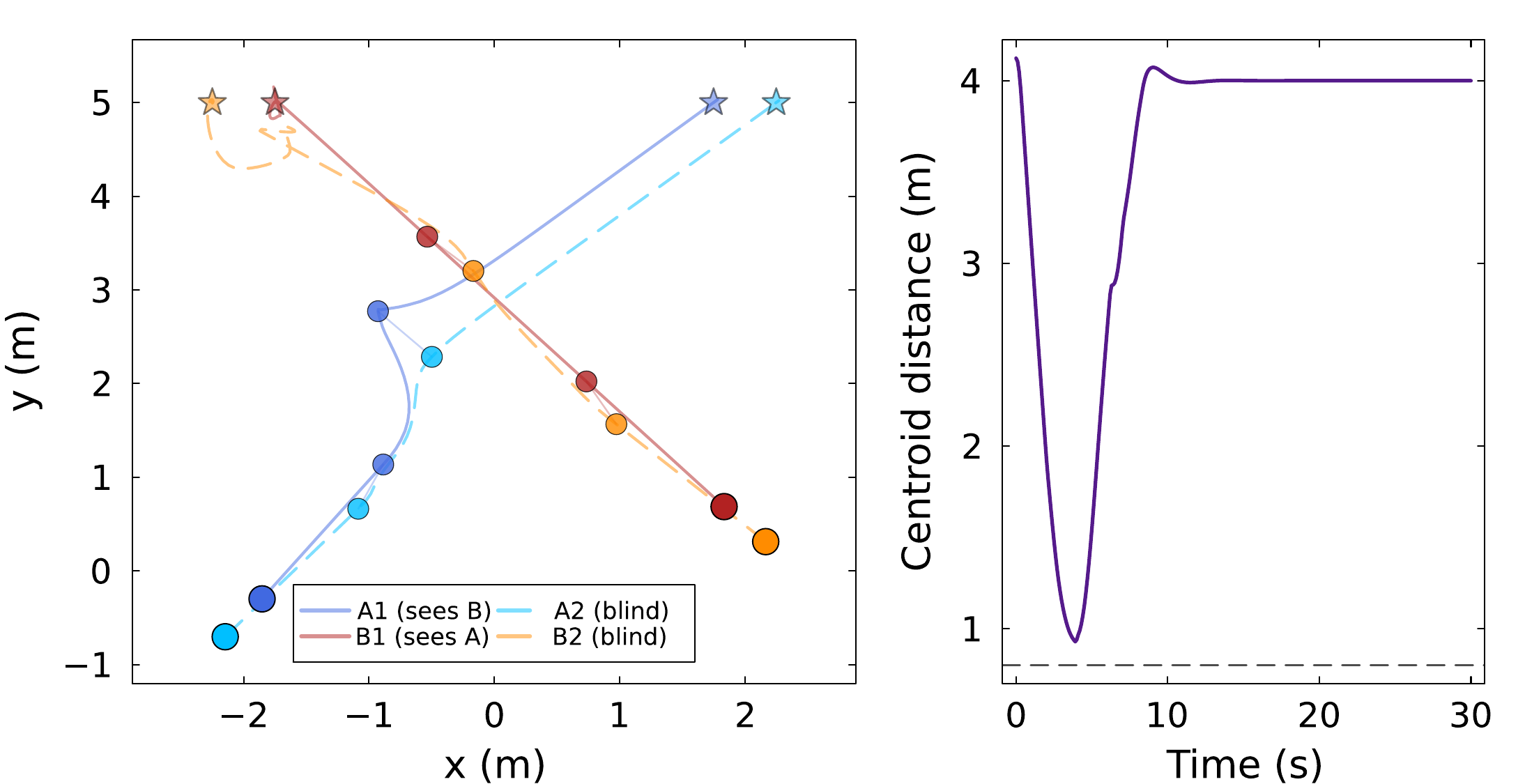}\vspace{-0.4em}
    \caption{Two teams of two agents cross diagonally toward opposite goals (left). The robust CBF constraint enables decentralized collision-free crossing, with the left team yielding. The distance plot (right) confirms the inter-team distance never drops below the safety threshold.}
    \label{fig:multi_team}
\end{figure}

\begin{figure}
\centering
\includegraphics[width=1.0\linewidth]{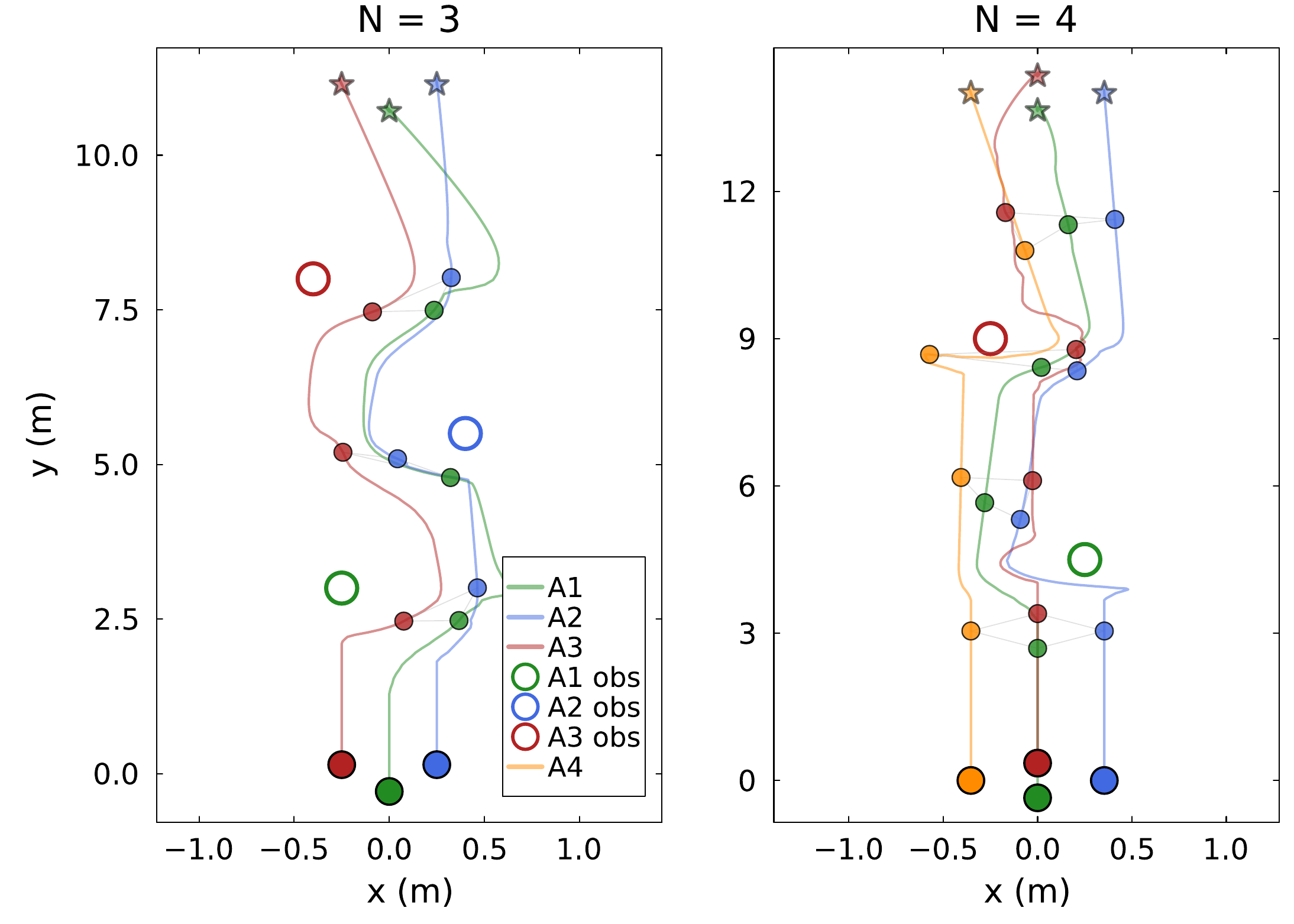}\vspace{-0.4em}
\caption{Two simulations showing 3 (left) and 4 (right) agent teams navigating through obstacles. The color of the obstacles matches with the agent that knows of it. Both scenarios are able to infer and avoid the obstacles, successfully reaching their goal positions.}
\label{fig:large_team_experiment}
\end{figure}

In Figure \ref{fig:multi_team}, both teams are able to cross towards their diagonal goals and avoid colliding with each other. The red and orange teams make it to the midpoint first, so the blue and purple teams treat them as an obstacle and yield before reaching their goal. This demonstrates the ability of our method to avoid moving obstacles and the soundness of our robust CBF formulation.

\subsection{Hypothesis 3: Our method provides a larger region of convergence and lower inference error than baseline methods.}

When obstacle and formation constraints are both active, we have an underdetermined system that we propose solving using Newton's method. We test this scenario by placing two agents such that their formation and obstacle avoidance constraints are both active. We then initialize Newton's method, Equation \eqref{eq:regularized_newton}, and the IM method with initial guesses around the learning agent.

Figure \ref{fig:newton_convergence} demonstrates that the region of convergence for Newton's method is quite large. This means that with a suitable initialization scheme, we can correctly infer the obstacle position. The IM method only converges to the true position inside a small circle around the true obstacle and converges to wrong positions or diverges outside.

\begin{figure}[t]
    \centering
    \includegraphics[width=1.0\linewidth]{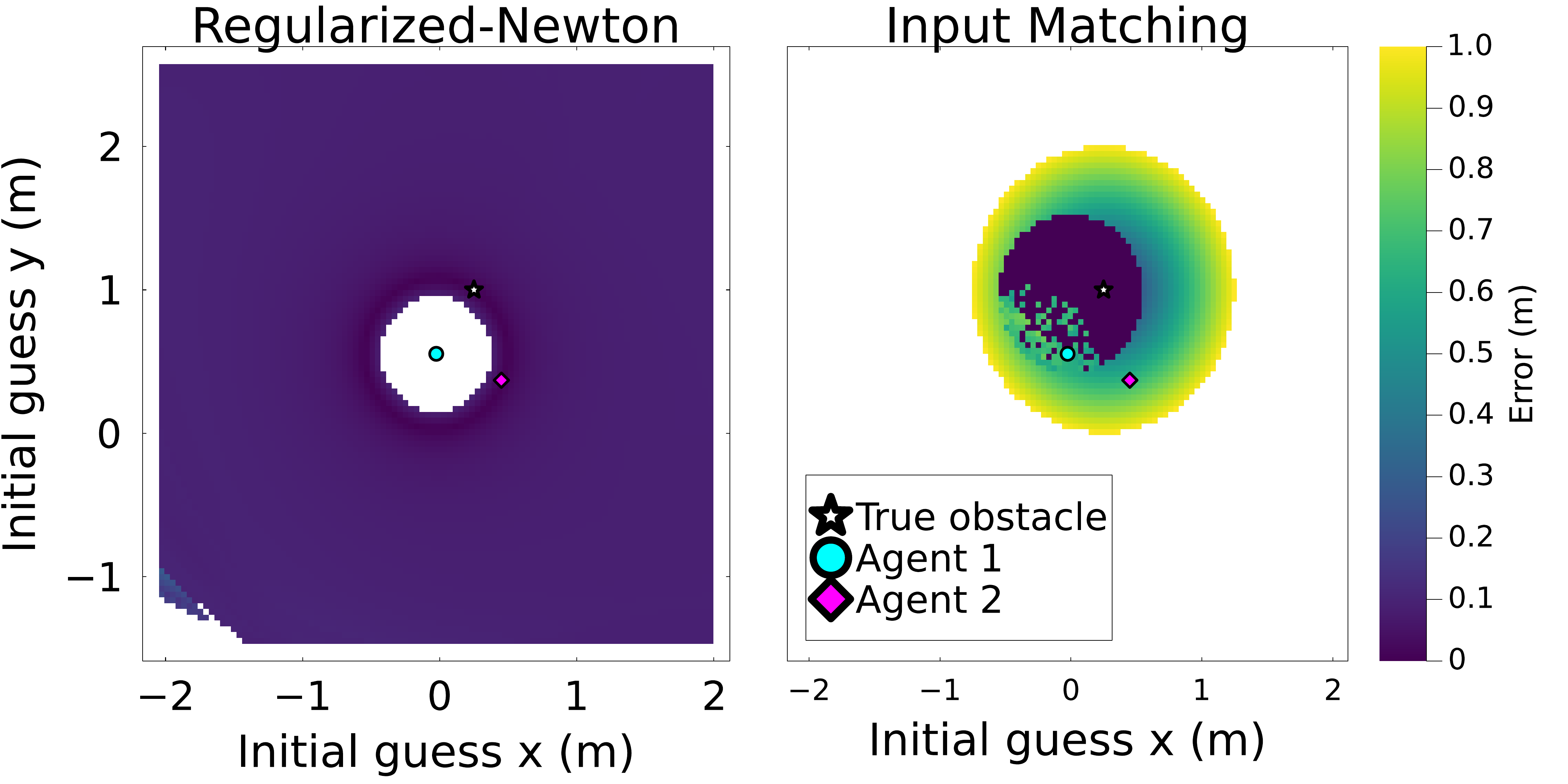}\vspace{-0.4em}
    \caption{
    Convergence regions of the regularized Newton method (left) and Input Matching (right). White denotes divergence. Newton's method converges from most initializations, whereas Input Matching requires initialization near the true obstacle.
    }
    \label{fig:newton_convergence}
\end{figure}
\begin{figure}[t]
    \centering
    \includegraphics[width=\linewidth]{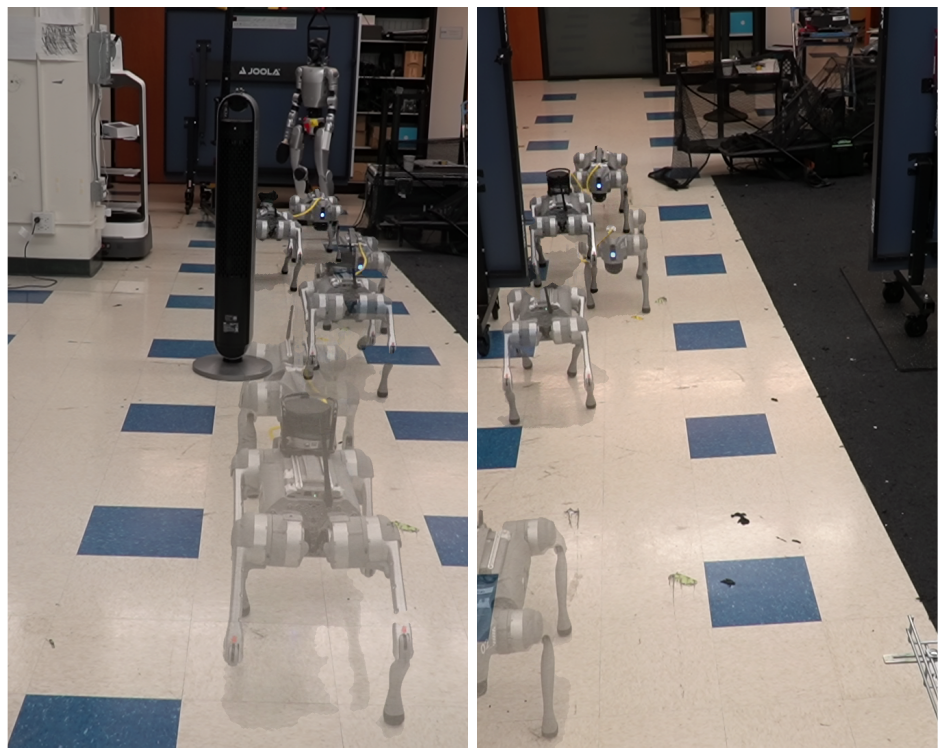}\vspace{-0.4em}
    \caption{
    Two rope-connected quadrupeds navigate around an obstacle (left) and through a gap between two walls (right). The rear robot knows the obstacles, while the front robot infers them from the rear robot’s safety-filtered actions.
    }
    \label{fig:hardware_keyframes}
\end{figure}
\begin{figure}[t]
    \centering
    \includegraphics[width=\linewidth]{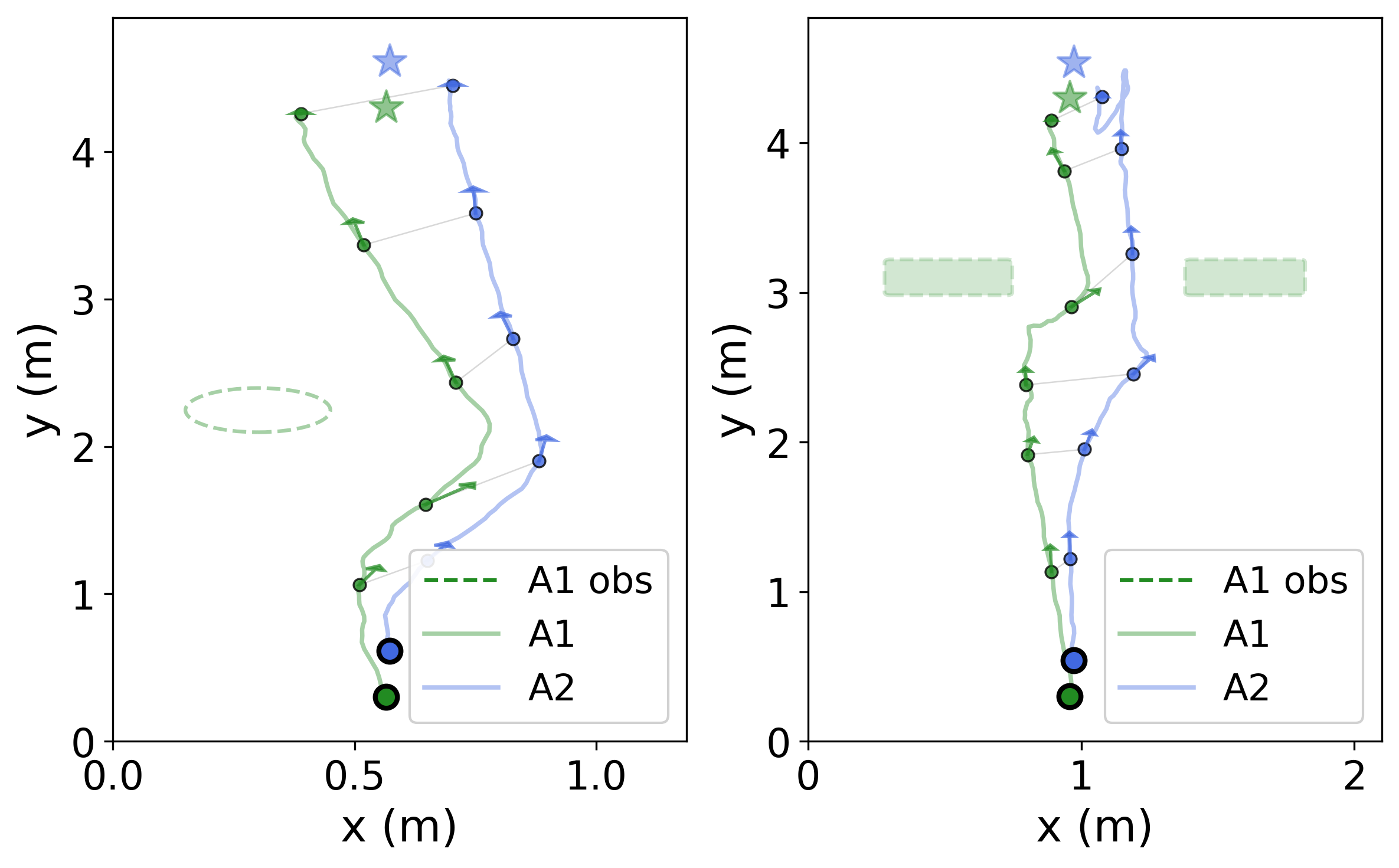}\vspace{-0.4em}
    \caption{Trajectory plots of the hardware experiments depicted in Figure \ref{fig:hardware_keyframes}. The plots confirm that agents were able to avoid obstacles, maintain formation constraints, and reach their goals.}
    \label{fig:hardware_trajectories}\vspace{-0.6em}
\end{figure}

\subsection{Hardware Experiments}

To test the real-world applicability of our method in terms of speed and robustness, we conduct hardware experiments on two Unitree Go2 quadruped robots. Each robot is modeled as a double integrator whose position is tracked by motion capture cameras. The filtered safe action of each robot is directly observed by the other robot. To respect physical hardware limitations, we introduce a maximum velocity constraint in the safety filter, similar to the multi-team example.



In both scenarios shown in Figure \ref{fig:hardware_keyframes}, the rear agent knows about the obstacle and is required to demonstrate it to the front agent. In the left image, a circular object was placed in the path of the agents. In the right image, there are two walls, whose edges are represented by circles, separated by a gap. In both cases, the front agent is able to learn about the obstacles before collision and navigate safely. Obstacle avoidance and formation keeping are confirmed by the trajectory plots in Figure \ref{fig:hardware_trajectories}. These hardware experiments show our method can be run in a real-time, receding-horizon fashion.

\section{Conclusion}

We propose a novel, decentralized multi-agent framework in which agents infer each other’s constraints via their safety-filtered actions. We derive a closed-form solution when only the obstacle constraint is active and a provably convergent Newton solver when there are multiple active constraints. This constraint inference method is combined with a decentralized planning method that provides provable safety guarantees when the agents start in the safe set and the formation distance is sufficiently smaller than the constraint activation distance. We validate our method with Monte Carlo simulations, scale it to multi-agent and multi-team interactions, and demonstrate its real-world applicability and safety with hardware experiments on quadruped robots. 

\bibliographystyle{ieeetr}
\bibliography{references}

\end{document}